\newtheorem{theorem}{Theorem}[section]
\newtheorem{lemma}[theorem]{Lemma}
\title{Optimizing Communication in Byzantine Agreement Protocols with Slim-HBBFT}
\author{ {Nasit S Sony} \\
	University of California, Merced\\
	CA 95340, USA \\
	\texttt{nsony@ucmerced.edu} \\
	\And
	{Xianzhong Ding} \\
	Lawrence Berkeley National Laboratory\\
	CA 94720, USA \\
	\texttt{dingxianzhong@lbl.gov} \\
        \And
	{Mukesh Singhal} \\
	University of California, Merced\\
	CA 95340, USA \\
	\texttt{msinghal@ucmerced.edu} \\
}
\begin{document}
\maketitle

\begin{abstract}

Byzantine agreement protocols in asynchronous networks have received renewed interest because they do not rely on network behavior to achieve termination. Conventional asynchronous Byzantine agreement protocols require every party to broadcast its requests (e.g., transactions), and at the end of the protocol, parties agree on one party's request. If parties agree on one party's requests while exchanging every party's request, the protocol becomes expensive. These protocols are used to design an atomic broadcast (ABC) protocol where parties agree on $\langle n-f \rangle$ parties' requests (assuming $n=3f+1$, where $n$ is the total number of parties, and $f$ is the number of Byzantine parties). Although the parties agree on a subset of requests in the ABC protocol, if the requests do not vary (are duplicated), investing in a costly protocol is not justified. We propose Slim-HBBFT, an atomic broadcast protocol that considers requests from a fraction of $n$ parties and improves communication complexity by a factor of $O(n)$. At the core of our design is a prioritized provable-broadcast (P-PB) protocol that generates proof of broadcast only for selected parties. We use the P-PB protocol to design the Slim-HBBFT atomic broadcast protocol. Additionally, we conduct a comprehensive security analysis to demonstrate that Slim-HBBFT satisfies the properties of the Asynchronous Common Subset protocol, ensuring robust security and reliability.

\end{abstract}

\keywords{ Blockchain, Distributed Systems, Byzantine Agreement, System Security}

\section{Introduction}
\label{sec:intro}

Byzantine agreement (BA) is a fundamental problem in computer systems, first introduced by Lamport, Pease, and Shostak in their pioneering works \cite{BYZ22,BYZ23}. The problem assumes a system where multiple computers, referred to as machines, parties, or nodes, try to agree on a value despite some of these computers being Byzantine (i.e., behaving arbitrarily, unpredictably, or maliciously). Since the seminal work \cite{BYZ23}, numerous models with various system assumptions have been proposed to solve the BA problem.


Bitcoin \cite{BITCOIN01} has refueled interest in Byzantine agreement protocols, particularly in asynchronous networks where the protocol's independence from time parameters is crucial. In asynchronous networks, each party must broadcast its requests. The atomic broadcast (ABC) protocol is designed to order these requests. In an ABC protocol, parties agree on $\langle n-f \rangle$ common requests, where $\langle f+1 \rangle$ of these requests come from honest parties. Although honest parties may propose varied requests, they might still broadcast the same requests due to different orderings of client requests and lack of knowledge of other parties' requests until an agreement is reached. Consequently, agreeing on a subset of requests does not necessarily improve the total number of accepted requests. We follow the committee approach like \cite{cMVBA,PMVBA,sony2025efficient,SlimABC,KSizeVABA,OHBBFT}.


Our main observations are that reducing the number of proposals leads to a more efficient protocol. We follow the committee approach from \cite{cMVBA,PMVBA,sony2025efficient,SlimABC,KSizeVABA,OHBBFT}. We leverage this reduction technique to design an atomic broadcast protocol. We introduce a slim atomic broadcast protocol where parties agree on $q$ $(1 \leq q \leq f+1)$ requests. This protocol reduces communication complexity by $O(n)$. We randomly select $\langle f+1 \rangle$ parties to broadcast their requests/proposals, ensuring at least one honest party is included, with an average of $\frac{2}{3}$ of the selected parties being honest. Consider the following two cases:
\begin{enumerate}
    \item If parties agree on one party's request, the communication cost is lower regardless of request variation among selected parties.
    \item If parties agree on $q$ proposals with non-varying requests, the protocol maintains low communication cost. If requests vary among the $q$ parties, the protocol benefits from both reduced communication cost and an increased number of accepted requests.
\end{enumerate}

\section{Slim-HBBFT}

\begin{figure}[t]
    \centering
    \includegraphics[width=0.792\textwidth,height=0.252\textwidth]{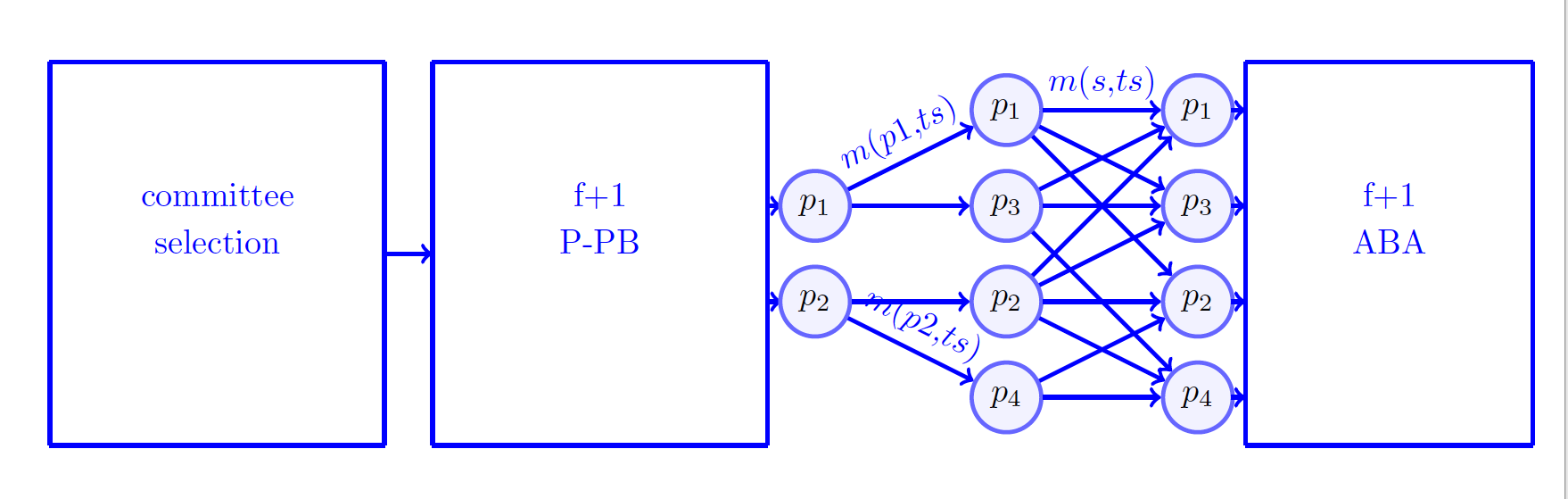}
    \caption{Slim-HBBFT illustration. Each selected party promotes its request using the P-PB protocol. After completing the P-PB protocol, a party broadcasts the requests with proof $ts$ as a proposal. Upon receiving a proposal, each party suggests it and waits for $\langle n-f \rangle$ suggestions, then runs the $ABA$ protocol as input.}    
    \label{fig:SlimACS}
\end{figure}
\paragraph{\textbf{High-Level overview.}}
Slim-HBBFT is an atomic broadcast protocol where honest parties agree on $q$ ($1 \leq q \leq f+1$) parties' requests.


\paragraph{\textbf{Committee Selection}} We select $\kappa = f+1$ parties using a standard committee selection protocol. These selected parties propose their requests (value $v$) using the P-PB protocol. The communication complexity of $n$ P-PB instances is $O(n^2v)$, where $v < Kn^2 \log n$ and $K$ is the security parameter. In contrast, the communication complexity of $n$ RBC instances is $O(n^3v)$. HoneyBadgerBFT \cite{HONEYBADGER01} uses erasure coding to broadcast requests, reducing the communication complexity to $O(n^2 |v| + Kn^3 \log n)$. By using the P-PB protocol, we eliminate the $O(Kn^3 \log n)$ term. However, the P-PB protocol does not guarantee the totality property, which ensures that all parties receive a selected party's broadcast. To address this, we introduce two additional steps: propose and suggest.

\paragraph{\textbf{Propose and Suggest Steps}}
After completing the P-PB protocol, a selected party broadcasts the proof as a proposal (propose step). Upon receiving a proposal, each party suggests the proposal (suggest step) and waits for $\langle n-f \rangle$ suggestions. If a party receives $\langle n-f \rangle$ suggestions, it implies receiving multiple suggestions for at least one proposal (total proposals are $\langle f+1 \rangle$, and a party receives $\langle 2f+1 \rangle$ proposals). If a proposal is suggested by multiple parties, it reaches more parties. This increases the likelihood more honest parties receive the proposal. An ABA instance outputs $1$ if an honest party inputs $1$. Each party invokes the ABA protocol for each selected party and inputs $1$ if it receives a proposal or suggestion. After receiving $\langle n-f \rangle$ suggestions, an honest party inputs $0$ to an ABA instance if there are no suggestions for the selected ABA instance.

\paragraph{\textbf{Threshold Encryption}}

To prevent adversaries from censoring broadcasts, we use a threshold encryption scheme. A party encrypts its requests using the threshold encryption scheme and then broadcasts the encrypted request. Decrypting the message requires $\langle f+1 \rangle$ parties' decryption shares, and an honest party reveals its decryption share after reaching an agreement. This ensures adversaries cannot decrypt an honest party's proposal until an agreement is reached. Although the encryption scheme adds $O(Kn^2)$ communication bits, this is negligible compared to the protocol's overall complexity. We ensure each proposal comes from a selected party that completes the P-PB protocol before parties input to the ABA instance. The P-PB protocol ensures the request is sent among $\langle f+1 \rangle$ honest parties, who then add their sign-share to the request. Figure \ref{fig:SlimACS} provides an illustration of the Slim-HBBFT protocol.

\section{Security Analysis}
 Slim-HBBFT provides an atomic broadcast protocol for a subset of parties' requests by utilizing the Asynchronous Common Subset (ACS) protocol and a threshold encryption scheme. To analyze the security of the Slim-HBBFT protocol, we demonstrate that it satisfies all the properties of the ACS protocol. A key requirement of the ACS protocol is that at least one of the provable-broadcasts (threshold-signature) reaches $2f+1$ parties. The following lemmas support this requirement.

\begin{lemma}\label{one proposal}
    In the $propose$ step, at least one proposal reaches multiple parties.

\end{lemma}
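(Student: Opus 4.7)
The plan is to prove the lemma by a straightforward pigeonhole argument that uses the suggestions collected in the subsequent suggest step as evidence of what must have happened during the propose step. First, I would invoke the committee selection rule: exactly $\kappa = f+1$ parties are chosen to propose, so at most $f+1$ distinct valid proposals can ever be in circulation. Second, I would fix an arbitrary honest party $p$ and consider the moment at which it is ready to supply inputs to its ABA instances: by the protocol, $p$ must have gathered $n-f = 2f+1$ suggestions, each from a distinct sender and each tagged with the identifier of a particular proposal.

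Third, distributing these $2f+1$ tagged suggestions into at most $f+1$ bins (one per proposal) forces, by pigeonhole, some bin to contain at least $\lceil (2f+1)/(f+1) \rceil = 2$ suggestions. Thus at least one proposal has been suggested by two distinct parties. Finally, I would appeal to the protocol rule that a party only emits a suggestion for a proposal it already possesses, together with the associated P-PB proof; each such suggestion therefore witnesses that its sender actually received that proposal during the propose step. Combining these observations produces a single proposal that has reached multiple parties, which is the claim.

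The main obstacle, and the place where I expect the argument to need the most care, is the treatment of Byzantine suggesters, since up to $f$ of the $2f+1$ suggestions may originate from corrupt parties that merely replay or relay messages. To handle this cleanly, I would lean on the verifiability of a suggestion: a well-formed suggestion must carry the P-PB proof, so even a Byzantine sender is forced to hold the proposal (whether delivered directly by the proposer or obtained indirectly), which is precisely what it means for the proposal to have ``reached'' that party. With this observation, the pigeonhole conclusion transfers without loss back to the propose step and establishes the lemma.
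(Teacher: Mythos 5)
Your argument is sound, but it takes a genuinely different route from the paper's. The paper proves the lemma entirely within the propose step: there are only $f+1$ proposals, and all $3f+1$ parties receive at least one proposal, so by a pigeonhole over receivers (the ratio $\frac{3f+1}{f+1}$) some proposal is common to more than one party. You instead reach forward into the suggest step: you fix an honest party that has collected $n-f = 2f+1$ suggestions from distinct senders, pigeonhole these into the $f+1$ proposal bins to get two suggestions for the same proposal, and then use the verifiability of suggestions (each carries the P-PB proof) to conclude that both senders --- even if Byzantine --- actually hold that proposal. Both are pigeonhole arguments, but yours front-loads part of the counting that the paper defers to Lemma~\ref{common proposal}, and it makes a propose-step statement conditional on the suggest step completing; note that the liveness of that premise (an honest party eventually gathers $2f+1$ suggestions) itself rests on every honest party receiving and suggesting at least one proposal, which is essentially the fact the paper's proof uses directly, so your route is not shorter --- it just relocates the assumption. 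What your version buys is a more defensible conclusion in the Byzantine setting: the two parties you exhibit are certified (via the P-PB proof) to hold the proposal, whereas the paper's receivers-based count silently allows the ``multiple parties'' to be corrupt parties that never act on the proposal, a looseness the paper then leans on again when Lemma~\ref{common proposal} reads Lemma~\ref{one proposal} as ``suggested by more than one party.''
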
 

\begin{proof}
  
    We know that $\langle f+1 \rangle$ parties propose their requests, and $\langle 3f+1 \rangle$ parties receive at least one proposal. Therefore, due to the fraction $\frac{3f+1}{f+1}$, at least one proposal is common to more than one party.
\end{proof} 

\begin{lemma} \label{common proposal}

    In the suggestion step, at least one proposal reaches $\langle 2f+1 \rangle$ parties.
    
\end{lemma}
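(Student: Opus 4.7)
The plan is to exhibit a specific proposal---namely the one from an honest proposer in the committee---and show that it must reach all $2f+1$ honest parties; that alone establishes the lemma's conclusion. The argument does not rely on a pigeonhole on suggestion counts (as in Lemma~\ref{one proposal}), but rather on the combined guarantees of committee selection and reliable message delivery.

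First, I would invoke the committee-selection property stated in Section~2: the $f+1$ selected proposers must include at least one honest party, since at most $f$ of the $n=3f+1$ parties are Byzantine. Fix such an honest proposer $P_j^\ast$ and let $v^\ast$ be its request. Next, I would argue that $P_j^\ast$'s P-PB instance completes successfully---the $2f+1$ honest parties cooperate and contribute enough sign-shares for $P_j^\ast$ to assemble a valid threshold signature $ts^\ast$ on $v^\ast$. In the propose step, $P_j^\ast$ then broadcasts $(v^\ast, ts^\ast)$ to all $n$ parties; by the eventual-delivery property of the asynchronous reliable channels, every one of the $2f+1$ honest parties receives this broadcast and, per the protocol specification, suggests $v^\ast$. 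Hence $v^\ast$ is suggested by at least $2f+1$ parties, i.e., reaches $2f+1$ parties.

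The main subtlety I anticipate is the interaction with asynchronous timing: an honest party exits the suggest step as soon as it collects $2f+1$ suggestions, which may occur before $v^\ast$'s suggestion has fully propagated in its local view. I would handle this by emphasizing that Lemma~\ref{common proposal} is a statement about the eventual global system state---specifically, that there exists a proposal whose suggester-set has size at least $2f+1$---rather than a snapshot at any individual party's local exit time. Eventual delivery of $P_j^\ast$'s honest broadcast to all honest parties is therefore sufficient to establish the claim.
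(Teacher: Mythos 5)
Your approach is genuinely different from the paper's: the paper never singles out the honest committee member, but instead runs a counting/pigeonhole argument over the suggestions that honest parties actually collect before proceeding (each of the $\langle 2f+1 \rangle$ honest parties waits for $\langle 2f+1 \rangle$ suggestions, so $\langle 2f+1 \rangle \times \langle 2f+1 \rangle$ suggestion receptions occur, which is incompatible with every proposal being common to at most $2f$ parties). That argument is deliberately timing-free: it says something about the state at the moment the honest parties have gathered the $\langle n-f \rangle$ suggestions they wait for, which is exactly the moment at which they fix their $0$/$1$ inputs to the ABA instances.

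This is where your proof has a genuine gap. You prove only an ``eventual'' version of the claim, and you explicitly relocate the conclusion to the eventual global system state. But in the asynchronous model the adversary schedules delivery, so it can stall the honest proposer $P_j^\ast$'s P-PB instance and its propose message until after every honest party has already received $\langle n-f \rangle$ suggestions (sourced from the up to $f$ Byzantine committee members, whose P-PB instances honest parties will happily complete quickly) and has moved on to inputting to the ABA instances. At that point two problems arise: (i) the protocol description does not guarantee that parties still issue suggestions for proposals arriving after they have collected their $\langle n-f \rangle$ suggestions, so the claim that all $2f+1$ honest parties suggest $v^\ast$ is an assumption about the protocol, not a consequence of it; and (ii) even granting it, a proposal that reaches $2f+1$ parties only after the suggestion step has effectively concluded does not deliver what the security analysis uses the lemma for, namely that among the suggestions each honest party waits for, some provable-broadcast is already common to $\langle 2f+1 \rangle$ parties before the ABA inputs are fixed. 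The paper's counting argument avoids relying on the timeliness of any particular proposer precisely because of this adversarial scheduling; your construction, by pinning the conclusion on one specific honest proposer's messages arriving, cannot.
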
 

 \begin{proof}
   
We know that $\langle 2f+1 \rangle$ honest parties suggest their received proposals. From Lemma \ref{one proposal}, at least one proposal is suggested by more than one party. Assume no proposal is common to more than $2f$ parties. However, every party waits for $\langle 2f+1 \rangle$ suggestions, and there must be $\langle 2f+1 \rangle \times \langle 2f+1 \rangle$ suggestions. If no proposal is suggested to more than $2f$ parties, the total number of suggestions would be $\langle 2f+1 \rangle \times \langle 2f \rangle < \langle 2f+1 \rangle \times \langle 2f+1 \rangle$. Since honest parties must send enough suggestions to ensure protocol progress, at least one party's proposal must be received by $\langle 2f+1 \rangle$ parties.   
\end{proof}

\begin{theorem}
With negligible probability of failure, the Slim-HBBFT protocol satisfies the Agreement, Validity, and Totality properties of the ACS protocol, given the security of the underlying Prioritized Provable Broadcast, Committee Selection, and ABA protocols.
\end{theorem}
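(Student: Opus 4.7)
The plan is to verify each of the three ACS properties separately, reducing each one to the corresponding guarantee of the underlying primitives (P-PB, Committee Selection, ABA) and invoking Lemmas \ref{one proposal} and \ref{common proposal} for the termination argument.

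For Agreement, I would appeal directly to the Agreement property of ABA. Committee Selection produces the same committee of size $\kappa = f+1$ at every honest party, so every honest party runs exactly the same $\langle f+1 \rangle$ ABA instances indexed by the selected members. Agreement of ABA then forces every honest party to decide the same bit in each instance, so the resulting bit-vector, and hence the output subset of accepted proposals, is identical at every honest party. For Validity, I would combine the unforgeability of the threshold signature inside P-PB with Committee Selection's soundness: any proposal appearing in the output must carry a valid proof $ts$, which can only be produced if the designated committee member actually initiated the P-PB, so the request originates from a legitimately selected party. Together with ABA Validity, this ensures the agreed set is non-empty and contains $q \in [1, f+1]$ genuine committee proposals, each backed by $\langle f+1 \rangle$ honest sign-shares.

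Totality is the main obstacle and the place where the two preceding lemmas are essential. I would argue in two stages. First, to guarantee that at least one ABA instance decides $1$, I invoke Lemma \ref{common proposal}: at least one proposal is suggested by $\langle 2f+1 \rangle$ parties, hence at least $f+1$ honest parties input $1$ to the corresponding ABA instance. The standard ACS scheduling rule (no honest party inputs $0$ to any instance until some other instance has already output $1$) combined with ABA Validity then forces this instance to terminate with output $1$. Second, once a single ABA outputs $1$, every honest party inputs $0$ to any instance it has not yet entered, and the probabilistic Termination of ABA ensures all remaining $\langle f+1 \rangle$ instances terminate. Finally, I must show that every $1$-valued instance yields a retrievable payload: the propose/suggest layer on top of P-PB guarantees that at least one honest party holds $ts$ and re-broadcasts it upon request, and threshold decryption reconstructs the underlying request once $\langle f+1 \rangle$ honest decryption shares are released after agreement. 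The negligible failure probability aggregates the soundness errors of the threshold signature scheme, Committee Selection, and the ABA common coin.
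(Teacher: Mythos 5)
The first thing to note is that the paper itself gives no proof of this theorem: it is stated bare, supported only by Lemmas \ref{one proposal} and \ref{common proposal}, which the paper says exist to guarantee that at least one threshold-signed proposal reaches $\langle 2f+1 \rangle$ parties. Your sketch therefore cannot be ``the same as the paper's proof''; rather, it fills in an argument the paper only gestures at. Your use of the two lemmas for the termination/totality direction is exactly the role the paper assigns them, and your Agreement and Validity reductions (ABA agreement plus a common committee; unforgeability of the P-PB threshold signature plus committee soundness) are the natural way to discharge those properties.

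There are, however, two genuine gaps in the totality argument as you state it. First, from Lemma \ref{common proposal} you only get that at least $f+1$ honest parties input $1$ to some ABA instance, and you then invoke ``ABA Validity'' to conclude that this instance outputs $1$. Standard ABA validity only fixes the output when \emph{all} honest parties input the same bit; here the protocol explicitly lets an honest party input $0$ to an instance once it has collected $\langle n-f \rangle$ suggestions and has seen no suggestion for that instance, so up to $f$ honest parties may input $0$ to the very instance you rely on, and a standard ABA may then legitimately decide $0$. Closing this requires either a biased (``non-intrusive'' / one-sided validity) ABA, or the HoneyBadgerBFT-style scheduling rule you allude to (no honest party inputs $0$ anywhere until enough instances have decided $1$) --- but that rule is not what the paper's suggest-then-input-$0$ mechanism implements, so you cannot have both; you must pick one and show it is consistent with the protocol. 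Second, your payload-retrieval step assumes that an honest party holding $ts$ ``re-broadcasts it upon request.'' No such fetch/recovery step appears in the protocol description; since P-PB deliberately lacks totality, an instance can decide $1$ while only a subset of honest parties holds the corresponding proposal and ciphertext, and without an explicit dissemination or request mechanism the remaining honest parties cannot produce identical output sets, which breaks both Totality and, downstream, Agreement on the decrypted requests. Both points need to be made explicit (and in the second case, an actual protocol step supplied or cited) before the reduction to the underlying primitives goes through.
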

\section{Conclusion}

This work introduces Slim-HBBFT, a novel atomic broadcast protocol designed to enhance the efficiency of Byzantine agreement in asynchronous networks. The key challenge addressed is the high communication complexity of traditional Byzantine agreement protocols, especially when dealing with duplicate requests from different parties. By leveraging a committee-based approach and the Prioritized Provable-Broadcast protocol, Slim-HBBFT significantly reduces the number of proposals needed for agreement, thus lowering communication costs. Our analytical performance analysis demonstrates that Slim-HBBFT can efficiently manage communication complexity while ensuring at least one honest party's proposal is included in the agreement. The protocol's design also prevents adversarial censorship through the use of threshold encryption, ensuring security and reliability. Future work will involve simulating the Slim-HBBFT protocol to explore optimal parameter ranges and develop a dynamic switching mechanism for selecting efficient committee sizes based on application needs, urgency of requests, and node patterns. Additionally, we will conduct comprehensive security and efficiency analyses to validate the protocol's robustness and performance.



\bibliography{references}

\appendix

\end{document}